\newcommand{\bigo}[1]{\ensuremath{\mathcal{O}(#1)}}
\newcommand{\true}{\ensuremath{\textsc{true}}}
\newcommand{\false}{\ensuremath{\textsc{false}}}
\newcommand{\alg}{\ensuremath{\mathcal{A}}}
\newcommand{\lbl}{\ensuremath{\ell}}
\newcommand{\disconnectset}{\ensuremath{D}}
\newcommand{\lock}{\ensuremath{\texttt{lock}}}
\newcommand{\state}{\ensuremath{\texttt{state}}}
\newcommand{\phase}{\ensuremath{\texttt{phase}}}
\newcommand{\lockset}{\ensuremath{L}}
\newcommand{\replyset}{\ensuremath{R}}
\newcommand{\winset}{\ensuremath{W}}
\newcommand{\holdset}{\ensuremath{H}}
\newcommand{\applyset}{\ensuremath{A}}
\newcommand{\candidateset}{\ensuremath{C}}
\newcommand{\compete}{\ensuremath{P}}
\newcommand{\msg}[2]{\ensuremath{\texttt{#1(#2)}}}
\newcommand{\Lock}{\ensuremath{\textsc{Lock}}}
\newcommand{\Unlock}{\ensuremath{\textsc{Unlock}}}
\algrenewcommand\ALG@beginalgorithmic{\small}
\algrenewcommand\alglinenumber[1]{\scriptsize #1:}
\newcommand{\multiline}[1]{%
  \begin{tabularx}{\dimexpr\linewidth-\ALG@thistlm}[t]{@{}X@{}}
    #1
  \end{tabularx}
}
\newif\ifarxiv
\newcommand{\junk}[1]{}
\newif\iffigabbrv
\title{On the Runtime of Local Mutual Exclusion for Anonymous Dynamic Networks}
\titlerunning{On the Runtime of Local Mutual Exclusion for Anonymous Dynamic Networks}
\author{Anya Chaturvedi}{School of Computing and Augmented Intelligence\\Arizona State University, Tempe, AZ, USA}{anya.chaturvedi@asu.edu}{https://orcid.org/0000-0002-4071-0467}{NSF (CCF-2312537) and U.S.\ ARO (MURI W911NF-19-1-0233).}
\author{Joshua J. Daymude}{School of Computing and Augmented Intelligence \and Biodesign Center for Biocomputing, Security and Society\\Arizona State University, Tempe, AZ, USA}{jdaymude@asu.edu}{https://orcid.org/0000-0001-7294-5626}{NSF (CCF-2312537).}
\author{Andr\'ea W. Richa}{School of Computing and Augmented Intelligence \and Biodesign Center for Biocomputing, Security and Society\\Arizona State University, Tempe, AZ, USA}{aricha@asu.edu}{https://orcid.org/0000-0003-3592-3756}{NSF (CCF-2312537, CCF-2106917) and U.S.\ ARO (MURI W911NF-19-1-0233).}
\authorrunning{A.\ Chaturvedi, J.\ J.\ Daymude, and A.\ W.\ Richa}
\keywords{Mutual exclusion, dynamic networks, message passing, concurrency}
\begin{document}

\maketitle

\begin{abstract}
    Algorithms for \textit{mutual exclusion} aim to isolate potentially concurrent accesses to the same shared resources.
    Motivated by distributed computing research on programmable matter and population protocols where interactions among entities are often assumed to be isolated, Daymude, Richa, and Scheideler (SAND`22) introduced a variant of the \textit{local mutual exclusion} problem that applies to arbitrary dynamic networks: each node, on issuing a lock request, must acquire exclusive locks on itself and all its \textit{persistent neighbors}, i.e., the neighbors that remain connected to it over the duration of the lock request.
    Assuming adversarial edge dynamics, semi-synchronous or asynchronous concurrency, and anonymous nodes communicating via message passing, their randomized algorithm achieves mutual exclusion (non-intersecting lock sets) and lockout freedom (eventual success 
    with probability 1).
    However, they did not analyze their algorithm's runtime.
    In this paper, we prove that any node will successfully lock itself and its persistent neighbors within $\bigo{n\Delta^3}$ open rounds of its lock request in expectation, where $n$ is the number of nodes in the dynamic network, $\Delta$ is the maximum degree of the dynamic network, rounds are normalized to the execution time of the ``slowest'' node, and ``closed'' rounds when some persistent neighbors are already locked by another node are ignored (i.e., only ``open" rounds are considered).
\end{abstract}

\section{Introduction} \label{sec:intro}

The algorithmic theory of \textit{dynamic networks} formally characterizes distributed systems of processes (or \textit{nodes}) whose connections change---periodically, randomly, or even adversarially---over time.
Using models like \textit{time-varying graphs} (TVGs)~\cite{Casteigts2018-journeythrough,Casteigts2012-timevaryinggraphs}, theoreticians have successfully obtained algorithms for many core distributed computing problems including broadcast~\cite{Casteigts2010-deterministiccomputations,Casteigts2015-shortestfastest,Raynal2014-simplebroadcast,Parzych2024-memorylower}, consensus and agreement~\cite{Coulouma2013-characterizationdynamic,Biely2012-agreementdirected,Kuhn2011-coordinatedconsensus}, leader election~\cite{Ingram2009-asynchronousleader,Augustine2015-fastbyzantine}, counting~\cite{Michail2013-namingcounting,Chakraborty2018-fasterexactcounting,DiLuna2016-nontrivial,Kowalski2020-polynomialcounting,DiLuna2023-optimalcomputation}, and more (see~\cite{Casteigts2018-journeythrough,Augustine2016-distributedalgorithmic} for surveys).
However---with relatively few exceptions~\cite{Casteigts2010-deterministiccomputations,Dubois2015-enablingminimal,Flocchini2012-searchingblack,Ingram2009-asynchronousleader}---an overwhelming majority of these algorithms assume \textit{synchronous concurrency} in which algorithm executions alternate between stages of node computation and instants of network dynamics.

Towards a deeper understanding of concurrency control in dynamic networks, Daymude, Richa, and Scheideler proposed a variant of the \textit{local mutual exclusion} problem  in \textit{dynamic anonymous networks}~\cite{Daymude2022-localmutual} in which anonymous nodes of a dynamic network must lock themselves and their \textit{persistent neighbors} (i.e., those that remain connected throughout the duration of the lock request) before entering their ``critical sections'' (i.e., before doing an actual action execution).
Their randomized algorithm for \Lock\ and \Unlock\ primitives achieves mutual exclusion (non-intersecting lock sets) and lockout freedom (every lock request eventually succeeds with probability 1) even when nodes are anonymous, operate under asynchronous concurrency, and experience edge dynamics concurrently with their computation.
Thus, distributed algorithms for dynamic networks can first be designed and analyzed for the sequential setting in which at most one node is active per time, and then---using this locking mechanism to isolate concurrently executed actions and dynamics---can be directly simulated in the asynchronous setting.

However, that algorithm is missing runtime analysis, obscuring the overhead that other algorithms would incur by using this lock-based concurrency control. 
In this paper, we address this gap, proving an expectation bound on the time any node waits from issuing its lock request to obtaining locks on itself and its persistent neighbors.
This analysis requires minor but nontrivial refinements to both the model (specifically, the model is now described in terms of 
adversarial fairness over node activations, 
rather than 
over individual enabled actions) 
and the algorithm (to close loopholes that allow an adversary to achieve arbitrarily large but finite runtimes), so we also reprove the algorithm's mutual exclusion and lockout freedom guarantees w.r.t.\ these changes.

\subparagraph*{Our Contributions.}
We summarize our main results as follows:
\begin{itemize}
    \item 
    To support our runtime analysis, we slightly refine
    the assumptions on adversarial fairness and parts of the local mutual exclusion algorithm of~\cite{Daymude2022-localmutual}. We 
    then 
    prove that the  refined
    algorithm still satisfies \textit{mutual exclusion} (non-intersecting lock sets) 
    and \textit{lockout freedom} (eventual success with probability 1).

    \item We prove that, using this (refined)
    local mutual exclusion algorithm, any node issuing a lock request will obtain locks on itself and its persistent neighbors within $\bigo{n\Delta^3}$ \textit{open rounds} in expectation, where $n$ is the number of nodes in the dynamic network, $\Delta$ is the maximum degree of the dynamic network, rounds can be viewed as
    normalized to the execution time of the ``slowest'' node, and open rounds are those in which none of the node's persistent neighbors are already locked by some other node, prohibiting progress.
\end{itemize}

\subsection{Related Work} \label{subsec:relwork}

Since its introduction~\cite{Dijkstra1965-mutualexclusion}, the \textit{mutual exclusion} problem has been one of the most widely-studied problems in parallel and distributed computing.
Here, we briefly situate the variant of the \textit{local mutual exclusion} problem introduced in~\cite{Daymude2022-localmutual} and refer the interested reader to their related work section for a detailed literature review.
Many past works solve mutual exclusion or a related problem for message passing systems on static network topologies.
For example, the \textit{arrow protocol}~\cite{Demmer1998-arrowdistributed,Raymond1989-treemutex} and its recent improvements~\cite{Ghodselahi2017-dynamicanalysis,Khanchandani2019-arvydistributed} use a token-passing scheme that requires only constant memory per node, and more recent results address mutual exclusion directly for fully anonymous systems~\cite{Raynal2020-mutualexclusion}.
However, these do not immediately apply to dynamic networks.
Mutual exclusion and contention resolution algorithms for \textit{mobile ad hoc networks} (MANETs) handle node and edge dynamics but typically assume time-ordered, instantaneous transmissions that are more powerful than our communication model of asynchronous message passing~\cite{Attiya2010-efficientrobust,Baldoni2002-distributedmutual,Benchaiba2004-distributedmutual,Chen2005-selfstabilizingdynamic,Sharma2014-tokenbased,Walter2001-mutualexclusion,Bender2005-adversarialcontention,Cali2000-ieee80211,Capetanakis1979-treealgorithms}.
Related works on \textit{self-stabilizing overlay networks} also consider edge dynamics but assume nodes have identifiers~\cite{Feldmann2020-surveyalgorithms}.

In contrast, the local mutual exclusion problem variant and probabilistic algorithm of~\cite{Daymude2022-localmutual} explicitly consider the challenge of neighborhood mutual exclusion on a \textit{time-varying graph}~\cite{Casteigts2018-journeythrough,Casteigts2012-timevaryinggraphs} with adversarial edge dynamics where nodes are anonymous and operate under semi-synchronous or asynchronous concurrency.
Since our primary goal is to analyze the locking time of this algorithm, we use the same computational model (Section~\ref{subsec:model}), problem statement (Section~\ref{subsec:problem}), and algorithm (Section~\ref{sec:alg})---with some key minor modifications that we will highlight as we go.
These changes support the clarity and tractability of our runtime analysis but do not fundamentally affect the strength or correctness of the results in~\cite{Daymude2022-localmutual}.

\subsection{Computational Model} \label{subsec:model}

Following the model in~\cite{Daymude2022-localmutual}, we consider a distributed system comprising a fixed set of nodes $V$.
Each node is assumed to be \textit{anonymous}, lacking a unique identifier, and has a local memory storing its \textit{state}.
Nodes communicate with each other via message passing over a communication graph whose topology changes over time.
We model this topology using a \textit{time-varying graph} $\mathcal{G} = (V, E, T, \rho)$ where $V$ is the set of nodes, $E$ is a (static) set of undirected pairwise edges between nodes, $T = \mathbb{N}$
 
is the \textit{lifetime} of $\mathcal{G}$, and $\rho : E \times T \to \{0,1\}$ is the \textit{presence} function indicating whether or not a given edge exists at a given time~\cite{Casteigts2012-timevaryinggraphs}.
A \textit{snapshot} of $\mathcal{G}$ at time $t \in T$ is the undirected graph $G_t = (V, \{e \in E : \rho(e, t) = 1\})$ and the \textit{neighborhood} of a node $u \in V$ at time $t \in T$ is the set $N_t(u) = \{v \in V : \rho(\{u, v\}, t) = 1\}$.
For $i \geq 0$, the $i$-th \textit{stage} lasts from time $i$ to the instant just before time $i+1$; thus, the communication graph in stage $i$ is $G_i$.

We assume that an adversary controls the presence function $\rho$ and that $E$ is the complete set of edges on nodes $V$; i.e., we do not limit which edges the adversary can introduce.
The only constraint we place on the adversary's topological changes is $\forall t \in T, u \in V, |N_t(u)| \leq \Delta$, where $\Delta > 0$ is the fixed number of \textit{ports} per node.
When the adversary establishes a new connection between nodes $u$ and $v$, it must assign the endpoints of edge $\{u, v\}$ to open ports on $u$ and $v$ (and cannot do so if either node has no open ports).
Node $u$ locally identifies $\{u, v\}$ using its corresponding \textit{port label} $\lbl \in \{1, \ldots, \Delta\}$ and $v$ does likewise.
For convenience of notation, we use $\lbl_u(v)$ to refer to the label of the port on node $u$ that is assigned to the edge $\{u, v\}$; this mapping of port labels to nodes is not available to the nodes.
Edge endpoints remain in their assigned ports (and thus labels remain fixed) until disconnection, but nodes $u$ and $v$ may label $\{u, v\}$ differently and their labels are not known to each other a priori.
Each node has a \textit{disconnection detector} that adds the label of any port whose connection is severed to a set $\disconnectset \subseteq \{1, \ldots, \Delta\}$.
A node's disconnection detector provides it a snapshot of $\disconnectset$ whenever it starts an action execution (see below) and then resets $\disconnectset$ to $\emptyset$.\footnote{Without this assumption, the adversary could disconnect an edge assigned to port $\lbl$ of node $u$ and then immediately connect a different edge to $\lbl$, causing an indistinguishability issue for node $u$.}

Nodes communicate via message passing.
A node $u$ sends a message $m$ to a neighbor $v$ by calling $\textsc{Send}(m, \lbl_u(v))$.
Message $m$ remains in transit until either $v$ receives and processes $m$ at a later stage chosen by the adversary, or $u$ and $v$ are disconnected and $m$ is lost.
Multiple messages in transit from $u$ to $v$ may be received by $v$ in a different order than they were sent.
A node always knows from which port it received a given message.

All nodes execute the same distributed algorithm $\alg$, which is a set of \textit{actions} each of the form $\langle label\rangle: \langle guard\rangle \to \langle operations\rangle$.
An action's \textit{label} specifies its name.
Its \textit{guard} is a Boolean predicate determining whether a node $u$ can execute it based on the state of $u$ and any message in transit that $u$ may receive.
An action is \textit{enabled} for a node $u$ if its guard is \true\ for $u$; a node $u$ is \textit{enabled} if it has at least one enabled action.
An action's \textit{operations} specify what a node does when executing the action, structured as (\textit{i}) receiving at most one message chosen by the adversary, (\textit{ii}) a finite amount of internal computation and state updates, and (\textit{iii}) at most one call to $\textsc{Send}(m, \ell)$ per port label $\ell$.

Each node executes its own instance of $\alg$ independently, sequentially (executing at most one action at a time), and reliably (meaning we do not consider crash or Byzantine faults).
We assume an adversarial scheduler controls the timing of node activations.
We primarily focus on \textit{semi-synchronous} concurrency: in each stage, the adversary activates any (possibly empty) subset of enabled nodes concurrently and all resulting action executions complete within that stage.
\textit{Asynchronous} concurrency allows action executions to span arbitrary finite time intervals.

In~\cite{Daymude2022-localmutual}, the adversary (scheduler) 
had a notion of fairness that ensured every continuously enabled action is eventually executed and every message in transit on a continuously existent edge is eventually processed.
Here, we reinterpret this fairness assumption to support our runtime analysis in terms of node activations.
From the perspective of this adversary, each enabled node $u$ has a nonempty set of \textit{enabled action executions} $\{(\alpha, m)\}$ it could enact, where $\alpha$ is an action enabled for $u$ and---if $\alpha$ is enabled by one or more messages in transit to $u$---$m$ is any one of these messages.
An enabled action execution $(\alpha, m)$ is disabled for a node $u$ when $\alpha$ is disabled for $u$ or $m$ is lost due to disconnection.
This adversary must activate enabled nodes such that any continuously enabled node is eventually executed, but once the adversary activates an enabled node, one of that node's enabled action executions is enacted \textit{uniformly at random}.\footnote{Following the taxonomy in~\cite{dubois2011taxonomydaemonsselfstabilization}, the scheduler adversary would be (weakly) fair and have no restrictions on distribution, boundedeness nor enabledness.} We discuss possible alternative approaches for the selection of the enabled action executions by a node, including that of (deterministic) FIFO queues at each node, in Section~\ref{sec:conclude}.

Finally, to bound the locking time of the local mutual exclusion algorithm, we define \textit{rounds}, which can be informally viewed as 
normalized to the ``slowest'' node's action execution.
Let $t_r \geq 0$ be the time that round $r \in \{0, 1, \ldots\}$ starts, with $t_0 = 0$.
Define $\mathcal{E}_r$ as the set of nodes that are enabled or currently executing an action at time $t_r$.
Then $t_{r+1} > t_r$ is the earliest time by which every node in $\mathcal{E}_r$ has either been disabled or completed an action execution.
Hence, round $r$ is the time interval $[t_r, t_{r+1})$; note that one round may span multiple stages.

\subsection{Local Mutual Exclusion} \label{subsec:problem}

Like the classical mutual exclusion problem~\cite{Dijkstra1965-mutualexclusion},  the \textit{local mutual exclusion} variant considered in~\cite{Daymude2022-localmutual} is defined w.r.t.\ a pair of primitives \Lock\ and \Unlock.
A node issues a lock request by calling \Lock; once acquired, it is assumed that a node eventually releases these locks by calling \Unlock.
Formally, each node $u$ stores a variable $\lock \in \{\bot, 0, \ldots, \Delta\}$ that is equal to $\bot$ if $u$ is unlocked, $0$ if $u$ has locked itself, and $\lbl_u(v) \in \{1, \ldots, \Delta\}$ if $u$ is locked by $v$.
The \textit{lock set} of a node $u$ in stage $i$ is $\mathcal{L}_i(u) = \{v \in N_i(u) : \lock(v) = \ell_v(u)\}$ 
which additionally includes $u$ itself if $\lock(u) = 0$.
Suppose that in stage $i$, a node $u$ calls \Lock\ to issue a lock request of its current closed neighborhood $N_i[u] = \{u\} \cup N_i(u)$.
This lock request \textit{succeeds} at some later stage $j > i$ if stage $j$ is the first in which $\mathcal{L}_j(u) = \{u\} \cup \{v \in N_i(u) : \forall t \in [i, j], \{u, v\} \in G_t\}$; i.e., $j$ is the earliest stage in which $u$ obtains locks for itself and every \textit{persistent} neighbor that remained connected to $u$ in stages $i$ through $j$.
In our context, a (randomized) algorithm $\alg$ \textit{solves} local mutual exclusion if it implements \Lock\ and \Unlock\ while satisfying the following properties:

\begin{itemize}
    \item \textit{Mutual Exclusion.} For all stages $i \in T$ and all pairs of nodes $u, v \in V$, $\mathcal{L}_i(u) \cap \mathcal{L}_i(v) = \emptyset$.
    
    \item \textit{Lockout Freedom.} Every issued lock request eventually succeeds with probability 1.
\end{itemize}

Since each node's \lock\ variable points to at most one node per time, it is impossible for two nodes' lock sets to intersect, trivially satisfying the mutual exclusion property.
But lockout freedom is more elusive, especially in the dynamic setting.

\section{Algorithm for Local Mutual Exclusion} \label{sec:alg}

The randomized algorithm for local mutual exclusion given by Daymude, Richa and Scheideler~\cite{Daymude2022-localmutual}  specifies actions for the \Lock\ and \Unlock\ primitives satisfying mutual exclusion and lockout freedom.
In order for our runtime analysis of that algorithm (Section~\ref{sec:analysis}) to be coherent and self-contained, we reproduce their algorithm's narrative description and pseudocode and highlight our minor modifications where relevant.

An execution of the \Lock\ primitive by a node $u$ is organized into two phases: a \textit{preparation phase} (Algorithm~\ref{alg:lockprep}) in which $u$ determines and notifies the nodes $L(u)$ it intends to lock, and a \textit{competition phase} (Algorithm~\ref{alg:lockcompete}) in which $u$ attempts to lock all nodes in $L(u)$, contending with any other node $v$ for which $L(v) \cap L(u) \neq \emptyset$.
An execution of the \Unlock\ primitive (Algorithm~\ref{alg:unlock}) by node $u$ is straightforward, simply notifying all nodes in $L(u)$ that their locks are released.
All local variables used in this algorithm are listed in Table~\ref{tab:variables} as they appear in the pseudocode.
In a slight abuse of notation, $N[u]$ and the subsets thereof represent both the nodes in the closed neighborhood of $u$ and the port labels of $u$ they are connected to.
For clarity of presentation, the algorithm pseudocode allows a node to send messages to itself (via ``port 0'') just as it sends messages to its neighbors, though in reality these self-messages would be implemented with in-memory variable updates.

\begin{table}[t]
    \centering
    \caption{The notation, domain, initialization, and description of the local variables used in the algorithm for local mutual exclusion by a node $u$.}
    \label{tab:variables}
    \begin{tabular}{cp{99pt}cp{208pt}}
        \toprule
        \textbf{Var.} & \textbf{Domain} & \textbf{Init.} & \textbf{Description} \\
        \midrule
        \lock & $\{\bot, 0, \ldots, \Delta\}$ & $\bot$ & $\bot$ if $u$ is unlocked, $0$ if $u$ has locked itself, and $\lbl_u(v)$ if $u$ is locked by $v$ \\
        \state & $\{\bot, \textsc{prepare}, \textsc{compete}$, $\textsc{win}, \textsc{locked}, \textsc{unlock}\}$ & $\bot$ & The lock state of node $u$ \\
        \phase & $\{\bot, \textsc{prepare}, \textsc{compete}\}$ & $\bot$ & The algorithm phase node $u$ is in \\
        \lockset & $\subseteq N[u]$ & $\emptyset$ & Ports (nodes) $u$ intends to lock \\
        \replyset & $\subseteq N[u]$ & $\emptyset$ & Ports via which $u$ has received \msg{ready}{}, \msg{ack-lock}{}, or \msg{ack-unlock}{} responses \\
        $\winset$ & $\subseteq N[u] \times \{\true, \false\}$ & $\emptyset$ & Port-outcome pairs of \msg{win}{} messages $u$ has received \\
        \holdset & $\subseteq N[u]$ & $\emptyset$ & Ports (nodes) on hold for the competition to lock $u$ \\
        \applyset & $\subseteq N[u]$ & $\emptyset$ & Ports (nodes) of applicants that can join the competition to lock $u$ \\
        \candidateset & $\subseteq N[u]$ & $\emptyset$ & Ports (nodes) of candidates competing to lock $u$ \\
        \compete & $\subseteq C(u) \times \{0, \ldots, K-1\}$ & $\emptyset$ & Port-priority pairs of the candidates \\
        \disconnectset & $\subseteq N[u]$ & $\emptyset$ & Ports that were disconnected since last execution \\
        \bottomrule
    \end{tabular}
\end{table}

\begin{algorithm}[t]
\caption{The \Lock\ Operation: Preparation Phase for Node $u$} \label{alg:lockprep}
\begin{algorithmic}[1]
    \State \textsc{InitLock$^{I}$}: On \Lock\ being called $\to$  \Comment{Initiator initiates a lock request.}
        \Indent
            \If {$\state = \bot$} \Comment{Only one locking operation at a time.}
                \State \Call{CleanUp}{\,}.
                \State Set $\state \gets \textsc{prepare}$ and $\lockset \gets N[u]$. 
                \ForAll{$\lbl \in \lockset$} \Call{Send}{\msg{prepare}{}, $\lbl$}.
                \EndFor
            \EndIf
        \EndIndent
    \State \textsc{ReceivePrepare$^{P}$}: On receiving \msg{prepare}{} via port $\lbl$ $\to$
        \Indent
            \State \Call{CleanUp}{\,}.
            \If {$\phase = \textsc{compete}$} set $\holdset \gets \holdset \cup \{\lbl\}$. \Comment{Put $\lbl$ on hold if already competing.}
            \Else {}
                \State Set $\applyset \gets \applyset \cup \{\lbl\}$ and $\phase \gets \textsc{prepare}$. \Comment{Add $\lbl$ as an applicant otherwise.}
                \State \Call{Send}{\msg{ready}{}, $\lbl$}.
            \EndIf
        \EndIndent     
    \State \textsc{ReceiveReady$^{I}$}: On receiving \msg{ready}{} via port $\lbl$ $\to$
        \Indent
            \State \Call{CleanUp}{\,}.
            \State Set $\replyset \gets \replyset \cup \{\lbl\}$.
        \EndIndent     
    \State \textsc{CheckStart$^{I}$}: $(\state = \textsc{prepare}) \wedge (\replyset \setminus \disconnectset = \lockset \setminus \disconnectset)$ $\to$  \Comment{All \msg{ready}{} messages received.}
        \Indent
            \State \Call{CleanUp}{\,}.
            \State Set $\state \gets \textsc{compete}$, $\replyset \gets \emptyset$, and $\winset \gets \emptyset$.
            \State Choose priority $p \in \{0, \ldots, K-1\}$ uniformly at random.
            \ForAll{$\lbl \in \lockset$} \Call{Send}{\msg{request-lock}{$p$}, $\lbl$}.
            \EndFor
        \EndIndent
    \Function {CleanUp}{\,}  \Comment{Helper function for processing disconnections $\disconnectset$.}
        \ForAll{$\lbl \in \disconnectset$}
            \If {$\lock = \lbl$} $\lock \gets \bot$.
            \EndIf
            \State Remove $\lbl$ from all sets: $\lockset \gets \lockset \setminus \{\lbl\}$, $\replyset \gets \replyset \setminus \{\lbl\}$, $\winset \gets \winset \setminus \{(\lbl, \cdot)\}$, $\holdset \gets \holdset \setminus \{\lbl\}$,
            \State $\applyset \gets \applyset \setminus \{\lbl\}$, $\candidateset \gets \candidateset \setminus \{\lbl\}$, and $\compete \gets \compete \setminus \{(\lbl, \cdot)\}$.
        \EndFor
        \If {$\candidateset = \emptyset$} 
            \ForAll {$\lbl \in \holdset$} \Call{Send}{\msg{ready}{}, $\lbl$}.
            \EndFor
            \State Set $\applyset \gets \applyset \cup \holdset$ and $\holdset \gets \emptyset$. \Comment{All nodes on hold become applicants.}
            \If {$\applyset \neq \emptyset$} set $\phase \gets \textsc{prepare}$.
            \Else {} set $\phase \gets \bot$.
            \EndIf
        \EndIf
    \EndFunction
\end{algorithmic}
\end{algorithm}

\begin{algorithm}[t]
\caption{The \Lock\ Operation: Competition Phase for Node $u$} \label{alg:lockcompete}
\begin{algorithmic}[1]
    \State \textsc{ReceiveRequest$^{P}$}: On receiving \msg{request-lock}{$p$} via port $\lbl$ $\to$
        \Indent
            \State \Call{CleanUp}{\,}.
            \If {$\lbl \in \applyset$} set $\applyset \gets \applyset \setminus \{\lbl\}$ and $\candidateset \gets \candidateset \cup \{\lbl\}$.
            \EndIf
            \State Set $\compete \gets \compete \cup \{(\lbl, p)\}$ and $\phase \gets \textsc{compete}$. \Comment{Close competition.}
         \EndIndent
    \State \textsc{CheckPriorities$^{P}$}: $(\phase = \textsc{compete}) \wedge (|\candidateset \setminus \disconnectset| = |\compete \setminus \disconnectset|) \wedge (\applyset \setminus \disconnectset = \emptyset)$ $\to$ 
        \Indent
            \State \Call{CleanUp}{\,}. \Comment{All priorities received.}
            \If {$\lock = \bot$ and $\exists (\lbl, p) \in \compete$ with a unique highest $p$}
                \State \Call{Send}{\msg{win}{\true}, $\lbl$} and \Call{Send}{\msg{win}{\false}, $\lbl'$} for all $\lbl' \in \candidateset \setminus \{\lbl\}$.
            \Else {} \Call{Send}{\msg{win}{\false}, $\lbl$} for all $\lbl \in \candidateset$.
            \EndIf
            \State Reset $\compete \gets \emptyset$.  \Comment{Competition is over.}
        \EndIndent
    \State \textsc{ReceiveWin$^{I}$}: On receiving \msg{win}{$b$} via port $\lbl$ $\to$
        \Indent
            \State \Call{CleanUp}{\,}.
            \State Set $\winset \gets \winset \cup \{(\lbl, b)\}$.
        \EndIndent
     \State \textsc{CheckWin$^{I}$}: $(\state = \textsc{compete}) \wedge (|\winset \setminus \disconnectset| = |\lockset \setminus \disconnectset|)$ $\to$ \Comment{All \msg{win}{$b$} replies received.}
        \Indent
            \State \Call{CleanUp}{\,}.
            \If {$\exists(\cdot, \false) \in \winset$}  \Comment{Start new locking attempt.}
                \State Choose priority $p \in \{0, \ldots, K-1\}$ uniformly at random.
                \ForAll{$\lbl \in \lockset$} \Call{Send}{\msg{request-lock}{$p$}, $\lbl$}.
                \EndFor
            \Else {}  \Comment{Succeeded in locking.}
                \State Set $\state \gets \textsc{win}$ and reset $\replyset \gets \emptyset$.
                \ForAll{$\lbl \in \lockset$} \Call{Send}{\msg{set-lock}{}, $\lbl$}.
                \EndFor
            \EndIf
            \State Reset $\winset \gets \emptyset$.
        \EndIndent
    \State \textsc{ReceiveSetLock$^{P}$}: On receiving \msg{set-lock}{} via port $\lbl$ $\to$
        \Indent
            \State Set $\lock \gets \lbl$ and $\candidateset \gets \candidateset \setminus \{\lbl\}$.
            \State \Call{CleanUp}{\,}.
            \State \Call{Send}{\msg{ack-lock}{}, $\lbl$}.
        \EndIndent
    \State \textsc{ReceiveAckLock$^{I}$}: On receiving \msg{ack-lock}{} via port $\lbl$ $\to$
        \Indent
            \State \Call{CleanUp}{\,}.
            \State Set $\replyset \gets \replyset \cup \{\lbl\}$.
        \EndIndent
    \State \textsc{CheckDone$^{I}$}: $(\state = \textsc{win}) \wedge (\replyset \setminus \disconnectset= \lockset \setminus \disconnectset)$ $\to$  \Comment{All lock acknowledgements received.}
        \Indent
            \State \Call{CleanUp}{\,}.
            \State Set $\state \gets \textsc{locked}$ and reset $\replyset = \emptyset$.
            \State \Return $\lockset$. \Comment{Locking complete.}
        \EndIndent
\end{algorithmic}
\end{algorithm}

\begin{algorithm}[t]
\caption{The $\Unlock$ Operation for Node $u$} \label{alg:unlock}
\begin{algorithmic}[1]
    \State \textsc{InitUnlock$^{I}$}: On \Unlock\ being called $\to$  \Comment{Initiator initiates an unlock.}
    \Indent
        \If {$\state = \textsc{locked}$} \Comment{Only one \Unlock\ per successful \Lock.}
            \State \Call{CleanUp}{\,}.
            \State Set $\state \gets \textsc{unlock}$ and reset $\replyset \gets \emptyset$.
            \ForAll {$\lbl \in \lockset$} \Call{Send}{\msg{release-lock}{}, $\lbl$}.
            \EndFor
        \EndIf
    \EndIndent
    \State \textsc{ReceiveRelease$^{P}$}: On receiving \msg{release-lock}{} via port $\lbl$ $\to$
       \Indent
          \State \Call{CleanUp}{\,}.
          \State Set $\lock \gets \bot$ and \Call{Send}{\msg{ack-unlock}{}, $\lbl$}.
       \EndIndent
    \State \textsc{ReceiveAckUnlock$^{I}$}: On receiving \msg{ack-unlock}{} via port $\lbl$ $\to$
       \Indent
          \State \Call{CleanUp}{\,}.
          \State Set $\replyset \gets \replyset \cup \{\lbl\}$.
       \EndIndent
    \State \textsc{CheckUnlocked$^{I}$}: $(\state = \textsc{unlock}) \wedge (\replyset \setminus \disconnectset = \lockset \setminus \disconnectset)$ $\to$  \Comment{All unlock acknowledgements received.}
        \Indent
            \State \Call{CleanUp}{\,}.
            \State Reset $\state \gets \bot$ and $\replyset = \emptyset$. \Comment{Unlocking complete.}
        \EndIndent
\end{algorithmic}
\end{algorithm}

Nodes that issue lock requests are called \textit{initiators} and nodes that are being locked or unlocked are \textit{participants}; it is possible for a node to be an initiator and participant simultaneously.
Initiators progress through a series of \textit{lock states} associated with the \state\ variable; participants advance through the algorithm's \textit{phases} as indicated by the \phase\ variable.
We first describe the algorithm from an initiator's perspective and then describe the complementary participants' actions.
In the pseudocode, all actions executed by initiators (resp., participants) are marked with an $I$ (resp., $P$) superscript.

A special \textsc{CleanUp} helper function runs at the beginning of every action execution to ensure that nodes adapt to any disconnections affecting their variables that may have occurred since they last acted (already reflected in the current guard evaluations, which all disregard ports in $D$), so we omit the handling of these disconnections in the description in the next paragraphs.
Note that this is a small modification of the original algorithm in~\cite{Daymude2022-localmutual}, where \textsc{CleanUp} appeared as both a helper function and its own action; we eliminated the standalone action in this version of the algorithm because, in the context of runtime analysis, the adversary could abuse this standalone action to drive up the number of rounds without making any real progress.\footnote{In addition to the changes with respect to the \textsc{CleanUp} function, a typo in the guard of the pseudocode for {\sc CheckPriorities$^P$} was corrected in this version of the algorithm.} 

When an initiator $u$ calls \Lock, it advances to the \textsc{prepare} state, sets $\lockset(u)$ to all nodes in its closed neighborhood $N[u]$, and then sends \msg{prepare}{} messages to all nodes of $\lockset(u)$.
Once it has received \msg{ready}{} responses from all nodes of $\lockset(u)$, it advances to the \textsc{compete} state and joins the competitions for each node in $\lockset(u)$ by sending \msg{request-lock}{$p$} messages to all nodes of $\lockset(u)$, where $p$ is a priority chosen uniformly at random from $\{0, \ldots, K - 1\}$, where $K = \Theta(\Delta^2)$. 
It then waits for the outcomes of these competitions.
If it receives at least one \msg{win}{\false} message, it lost this competition and must compete again.
Otherwise, if all responses are \msg{win}{\true}, it advances to the \textsc{win} state and sends \msg{set-lock}{} messages to all nodes of $\lockset(u)$.
Once it has received \msg{ack-lock}{} responses from all nodes of $\lockset(u)$, it advances to the \textsc{locked} state indicating
$\lockset(u)$ now represents the lock set $\mathcal{L}(u)$.
Note that taking $K = \Theta(\Delta^2)$ incurs messages of size $\Theta(\log \Delta)$, an increase from the $\Theta(1)$-message size in~\cite{Daymude2022-localmutual}, for $\Delta=\omega (1)$. This is necessary, however, to guarantee a polynomial bound on the number of open competition trials it takes in expectation for a node $u$ to win,
as already pointed out in~\cite{Daymude2022-localmutual}.

A participant $v$ is responsible for coordinating the competition among all initiators that want to lock $v$.
To delineate successive competitions, $v$ distinguishes among initiators that are \textit{candidates} in the current competition, \textit{applicants} that may join the current competition, and those that are \textit{on hold} for the next competition.
When $v$ receives a \msg{prepare}{} message from an initiator $u$, it either puts $u$ on hold (i.e., $H(v) \gets H(v) \cup \{\ell_v(u)\}$) if a competition is already underway or adds $u$ as an applicant (i.e., $A(v) \gets A(v) \cup \{\ell_v(u)\}$), advances to the \textsc{prepare} phase and replies \msg{ready}{} otherwise.
Participant $v$ promotes its applicants to candidates (i.e., $C(v) \gets C(v) \cup \{\ell_v(u)\}$) when $v$ receives their \msg{request-lock}{$p$} messages and advances to the \textsc{compete} phase.
Once all such messages are received from the competition's candidates, $v$ notifies the one with the unique highest priority of its success and all others of their failure (or, in the case of a tie, all candidates fail).
A winning competitor is removed from the candidate set while all others remain to try again; once the candidate set is empty, $v$ promotes all initiators that were on hold to applicants.
Finally, when $v$ receives a \msg{set-lock}{} message, it sets its \lock\ variable accordingly and responds with an \msg{ack-lock}{} message.

\section{Locking Time Analysis} \label{sec:analysis}

In this section, we bound the locking time of the (refined)
local mutual exclusion algorithm of Daymude, Richa, and Scheideler~\cite{Daymude2022-localmutual}, i.e., the number of rounds elapsed from the time a node issues its lock request to the time it successfully obtains locks on itself and its persistent neighbors.
We first characterize the set and maximum number of enabled action executions a node $u$ can have at any point in time (Lemmas~\ref{lem:initiatorcheckdisable}--\ref{lem:action_execution}), allowing us to bound the expected number of rounds for a continuously enabled action execution to either be executed or become disabled (Lemma~\ref{lem:progress}).

We then turn to the analysis of {\em competition trials}, i.e., the intervals in which an initiator sends a random priority to its participants, waits for those participants to receive priorities from all their competing initiators, and then is informed of whether it won (with the unique highest priority) over all its participants.
As is standard in mutual exclusion analyses, our locking time for an initiator $u$  ignores any round during which some participant $v$ of an initiator $u$ is already locked by another node $w \neq u$, prohibiting $u$ from making progress until $w$ unlocks $v$; we call such a round a {\em closed round} for node $u$. We call any round that is not closed an {\em open round} for node $u$ (Definition~\ref{def:open}).
Lemma~\ref{lem:competitiontrialtime} bounds the expected number of open rounds spent within a single competition trial and Theorem~\ref{thm:lockrounds} analyzes the number of competition trials a node needs to participate in before it will win one, together yielding the locking time bound of $\bigo{n\Delta^3}$ open rounds in expectation.

Throughout this analysis, we categorize actions as either \textsc{Receive} actions (e.g., \textsc{ReceivePrepare}$^P$, \textsc{ReceiveReady}$^I$) or \textsc{Check} actions (e.g., \textsc{CheckStart}$^I$, \textsc{CheckPriorities}$^P$); we also categorize them as either initiator actions or participant actions depending on the role of nodes that execute them.

\begin{lemma} \label{lem:initiatorcheckdisable}
    At most one initiator \textsc{Check} action is enabled per node per time; moreover, no enabled initiator \textsc{Check} action is ever disabled.
\end{lemma}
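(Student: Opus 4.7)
The plan is to address the two claims separately. For the first, I would list the four initiator \textsc{Check} actions---\textsc{CheckStart}$^I$, \textsc{CheckWin}$^I$, \textsc{CheckDone}$^I$, and \textsc{CheckUnlocked}$^I$---and observe that each guard contains a conjunct requiring a distinct value of \state\ (namely \textsc{prepare}, \textsc{compete}, \textsc{win}, and \textsc{unlock}, respectively). Since $u$'s \state\ variable holds exactly one value at any time, at most one of these four guards can evaluate to \true, giving the first claim immediately.

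For the ``no disablement'' claim, I would fix an arbitrary enabled initiator \textsc{Check} action $\alpha$ of $u$ whose guard has the form $(\state = \sigma) \wedge \Phi$, where $\Phi$ is a set or cardinality equality involving $\lockset$, one of $\replyset$ or $\winset$, and $\disconnectset$. First I would argue the \state\ conjunct persists until $\alpha$ executes: inspecting the pseudocode, \state\ is modified only by $u$ executing \textsc{InitLock}$^I$, \textsc{InitUnlock}$^I$, or one of the four initiator \textsc{Check} actions. The two \textsc{Init} actions require $\state \in \{\bot, \textsc{locked}\}$ and the other three \textsc{Check} actions require \state-values distinct from $\sigma$, so none of them are enabled while $\state = \sigma$. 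Hence \state\ can leave $\sigma$ only via $\alpha$ itself.

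Next, I would argue that $\Phi$, once true, remains true until $\alpha$ fires, by considering the only events that can perturb it: a new disconnection added to $\disconnectset$, or $u$ executing a \textsc{Receive} action that grows $\replyset$ or $\winset$. Disconnections preserve $\Phi$ using the invariant $\replyset \subseteq \lockset$ (respectively, $\{\ell : (\ell, \cdot) \in \winset\} \subseteq \lockset$) maintained by \textsc{InitLock}$^I$ and by the symmetric cleanup of both sides in \textsc{CleanUp}: a port $\ell$ newly added to $\disconnectset$ lies in $\lockset \setminus \disconnectset$ iff it lies in $\replyset \setminus \disconnectset$ (and analogously for $\winset$), so removing $\ell$ from both sides preserves the equality. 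For \textsc{Receive} actions, I would use a per-phase counting invariant: while $\state = \sigma$, $u$ has sent exactly one outgoing request (\msg{prepare}{}, \msg{request-lock}{$p$}, \msg{set-lock}{}, or \msg{release-lock}{}, respectively) per port in $\lockset$, and each such request triggers at most one matching reply; once $|\replyset \setminus D|$ (or $|\winset \setminus D|$) equals $|\lockset \setminus D|$, no further matching reply can arrive, so no \textsc{Receive} can alter the relevant set.

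The main obstacle I anticipate is verifying the ``at most one reply per request'' invariant in the preparation phase, where a participant $v$ receiving \msg{prepare}{} from $u$ may defer its \msg{ready}{} response by placing $u$'s port in $\holdset(v)$ and only releasing it later through \textsc{CleanUp}'s hold-to-applicant flush. I would need to show that this deferred bookkeeping still yields at most one \msg{ready}{} per \msg{prepare}{}; this follows because $u$'s port is added to $\holdset(v)$ only on receipt of a \msg{prepare}{} from $u$, and $\holdset(v)$ is fully emptied into $\applyset(v)$ during each \textsc{CleanUp} flush, with a single \msg{ready}{} dispatched per element moved. Once this invariant is established, the counting argument closes, so $\alpha$'s guard cannot be invalidated before $\alpha$ is executed.
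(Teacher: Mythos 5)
Your proposal is correct and follows essentially the same route as the paper: distinct required \state\ values give uniqueness, \state\ cannot change before the enabled \textsc{Check} action itself fires, and the set-equality conjunct persists because disconnections are removed from both sides symmetrically and no further matching replies can arrive. Your added detail (the one-reply-per-request counting and the \holdset\ flush in \textsc{CleanUp}) just makes explicit what the paper asserts in one line ("once they are equal, there cannot be any additional messages to receive").
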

\begin{proof}
    Initiator \textsc{Check} actions' guards have two conditions: one requiring the node's \state\ to have a particular value (e.g., $\state = \textsc{prepare}$ in \textsc{CheckStart}) and another requiring equality between the set of neighbors the node expects to receive messages from and the set of neighbors it has actually received messages from so far, ignoring disconnections (e.g., $R \setminus D = L \setminus D$ in \textsc{CheckStart}).
    Each initiator \textsc{Check} action requires a different \state\ value, so at most one can be enabled for any given node at a time.
    Moreover, a node's \state\ variable can only be modified by that node's execution of an initiator \textsc{Check} action.
    So an enabled initiator \textsc{Check} action can't be disabled due to a \state\ change.
    Otherwise, for a node with an enabled initiator \textsc{Check} action, the received and expected message sets are equal (less disconnections).
    Once they are equal, there cannot be any additional messages to receive; moreover, any disconnections are ignored by both sets symmetrically.
    So once these two sets are equal less disconnections, they remain equal less disconnections and cannot cause the initiator \textsc{Check} action to be disabled.
\end{proof}

Recall from Section~\ref{subsec:model} that an enabled node has one or more enabled action executions $\{(\alpha, m)\}$---where $\alpha$ is an action enabled for $u$ and $m$ is any one of the messages in transit to $u$ enabling $\alpha$, if applicable---and when an enabled node is activated, the adversary selects one of its enabled action executions to execute uniformly at random.

\begin{lemma} \label{lem:action_execution}
    Any node has at most $2\Delta+4$ enabled action executions at any given time.
\end{lemma}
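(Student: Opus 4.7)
The plan is to partition the enabled action executions at any node $u$ into two groups: (i) the \textsc{Check} actions together with the externally-triggered \textsc{InitLock}$^I$ and \textsc{InitUnlock}$^I$, and (ii) the \textsc{Receive} actions. For group (i), Lemma~\ref{lem:initiatorcheckdisable} already caps the enabled initiator \textsc{Check} actions at one; \textsc{CheckPriorities}$^P$ is the only participant \textsc{Check} action and thus contributes at most one; and the $\state$-based guards of \textsc{InitLock}$^I$ and \textsc{InitUnlock}$^I$ (requiring $\state = \bot$ and $\state = \textsc{locked}$, respectively, together with the serial issuance of lock/unlock calls by the higher-level client) ensure that each of these is pending at most once. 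This gives at most four action executions in group (i).

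For group (ii), each enabled execution $(\alpha, m)$ corresponds to exactly one in-transit message $m$ delivered to $u$, so it suffices to bound the in-transit messages per neighbor. I would argue that, for any neighbor $v$ occupying a port of $u$, at most two messages from $v$ to $u$ can be simultaneously in transit: one with $v$ acting as \emph{initiator} and $u$ as participant, and one with $v$ acting as \emph{participant} and $u$ as initiator. The initiator-direction bound follows by tracing $v$'s $\state$ progression $\textsc{prepare} \to \textsc{compete} \to \textsc{win} \to \textsc{locked} \to \textsc{unlock}$: each outgoing message to $u$ (\msg{prepare}{}, \msg{request-lock}{p}, \msg{set-lock}{}, \msg{release-lock}{}) is emitted at most once per state transition or competition retry, and every such send is followed by $v$ waiting, via \textsc{CheckStart}$^I$, \textsc{CheckWin}$^I$, \textsc{CheckDone}$^I$, or \textsc{CheckUnlocked}$^I$, for the corresponding reply (\msg{ready}{}, \msg{win}{b}, \msg{ack-lock}{}, or \msg{ack-unlock}{}) from $u$ before issuing the next send, so the earlier message must already have been consumed. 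A symmetric argument on $v$'s participant $\phase$ yields the analogous single-message cap on the participant direction, since $v$'s participant replies are triggered exclusively by initiator messages that $u$ cannot re-issue until it has consumed $v$'s earlier reply. Summing over $u$'s at most $\Delta$ occupied ports gives at most $2\Delta$ enabled \textsc{Receive} action executions.

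Combining the two groups yields the claimed $2\Delta + 4$ bound. The main obstacle will be to rigorously establish the ``at most one in-flight message per role-direction per port'' invariant in the presence of competition retries, which re-issue \msg{request-lock}{p} on lost competitions, and the \textsc{CleanUp} function, which can prune ports on disconnection; the cleanest phrasing is likely a loop invariant stating that between two successive $v$-to-$u$ sends in the same role, $v$ must first consume a response that $u$ could only have emitted after processing the earlier send, so by induction no second send can depart until the predecessor is gone from the channel.
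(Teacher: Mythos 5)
There is a concrete accounting gap in your group (ii). Because an initiator sets $\lockset \gets N[u]$, which includes $u$ itself, the node $u$ also sends \msg{prepare}{}, \msg{request-lock}{$p$}, \msg{set-lock}{}, and \msg{release-lock}{} messages \emph{to itself} via ``port 0'', and these in-transit self-messages enable \textsc{Receive} actions at $u$ exactly like messages from neighbors. Your bound sums only over the ``at most $\Delta$ occupied ports'', so it misses up to two further enabled \textsc{Receive} executions; the paper's count for this group is $2(\Delta+1)=2\Delta+2$, citing Lemma~11 of~\cite{Daymude2022-localmutual} for the two-messages-per-channel bound. The reason your total still lands on $2\Delta+4$ is a compensating difference in group (i): you count \textsc{InitLock}$^I$ and \textsc{InitUnlock}$^I$ as enabled action executions (giving $4$), whereas the paper counts only the \textsc{Check} actions (giving $2$) and does not treat the externally-triggered \Lock/\Unlock calls as part of the scheduler's pool at all. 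Under your convention, once the self-messages are added back, the total becomes $2\Delta+6$, exceeding the claimed bound; under the paper's convention your group (i) should shrink to $2$ and everything again sums to $2\Delta+4$. So as written the decomposition does not establish the lemma --- you need to either drop the \textsc{Init} actions from the count (and justify why they are not scheduled action executions) or show that self-channel messages cannot coexist with a pending \textsc{Init} call, neither of which your argument currently does.

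A secondary point: your ``at most one in-flight message per role per port'' handshake invariant is essentially a re-derivation of Lemma~11 of~\cite{Daymude2022-localmutual}, which the paper simply invokes. Re-proving it is legitimate and your sketch follows the right request--reply alternation, but you yourself flag that the retry (\msg{request-lock}{$p$} after a lost competition), the hold-set release of \msg{ready}{} messages in \textsc{CleanUp}, and disconnections are the delicate cases, and these are left unargued; as submitted this part is an outline rather than a proof, whereas citing the existing lemma would close it immediately.
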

\begin{proof}
    Consider any node $u$.
    In our algorithm, there are two kinds of enabled action executions $(\alpha, m)$: those where $\alpha$ is a \textsc{Check} action (with no associated message $m$) and those where $\alpha$ is a \textsc{Receive} action and $m$ is a message in transit to $u$ that is enabling $\alpha$.
    By Lemma~\ref{lem:initiatorcheckdisable}, $u$ can have at most one initiator \textsc{Check} action enabled per time.
    Additionally, there is only one participant \textsc{Check} action, \textsc{CheckPriorities$^{P}$}.
    So $u$ can have at most two enabled action executions where $\alpha$ is a \textsc{Check} action at any given time.

    Observe that the number of messages in transit to $u$ upper bounds its current number of enabled action executions where $\alpha$ is a \textsc{Receive} action.
    Node $u$ has at most $\Delta$ neighbors, and Lemma~11 of~\cite{Daymude2022-localmutual} guarantees that there are at most two messages in transit between $u$ and any one of these neighbors at any time.
    Thus, including the messages $u$ sends to itself, $u$ can have at most $2(\Delta+1)$ enabled action executions where $\alpha$ is a \textsc{Receive} action per time.
    Combined, this yields a total of at most $2 + 2(\Delta + 1) = 2\Delta + 4$ enabled action executions.
\end{proof}

We now show that our fairness assumption (Section~\ref{subsec:model}) ensures every continuously enabled action execution will be executed in $\bigo{\Delta}$ rounds in expectation.

\begin{lemma} \label{lem:progress}
    Every enabled action execution is either executed or disabled within $2\Delta + 4$ rounds of becoming enabled, in expectation.
\end{lemma}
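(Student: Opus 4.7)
The plan is to analyze, round by round, the probability that a fixed enabled action execution $(\alpha, m)$ at a node $u$ is either selected by the random scheduler or invalidated. The key ingredients are Lemma~\ref{lem:action_execution}, which caps the number of competing options the scheduler faces whenever $u$ is activated, and the round definition, which forces every node in $\mathcal{E}_r$ to either act or become disabled by the end of round $r$.

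First I would establish the following per-round claim: in any round $r$ in which $(\alpha, m)$ is enabled at time $t_r$, the event ``$(\alpha, m)$ is executed or disabled by $t_{r+1}$'' occurs with probability at least $1/(2\Delta+4)$. Since $(\alpha, m)$ is enabled at $t_r$, the node $u$ itself is enabled, so $u \in \mathcal{E}_r$. By the definition of round boundaries, by time $t_{r+1}$ the node $u$ is either disabled (which immediately disables $(\alpha, m)$) or completes an action execution; in the latter case, $u$ is activated at some time $s \in [t_r, t_{r+1})$. If $(\alpha, m)$ is disabled anywhere in $[t_r, s]$ (e.g., because $m$ is lost to a disconnection or some sibling action execution flips the guard), the claim's event has already occurred. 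Otherwise $(\alpha, m)$ is still enabled when $u$ is activated, and by Lemma~\ref{lem:action_execution} it is one of at most $2\Delta+4$ enabled action executions; the scheduler selects one of these uniformly at random, so $(\alpha, m)$ is executed with probability at least $1/(2\Delta+4)$.

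Iterating this per-round claim, the number of rounds from the moment $(\alpha, m)$ becomes enabled until it is first either executed or disabled is stochastically dominated by a geometric random variable with success probability $1/(2\Delta+4)$, whose expectation is $2\Delta+4$. This yields the claimed bound.

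The most delicate point, and the main obstacle, is carefully handling cases where $u$ may be activated more than once in a single round or where $u$'s set of enabled action executions changes between $t_r$ and $s$ (e.g., because some sibling action execution is consumed by an earlier activation in the round, or is disabled by an intervening disconnection). I would sidestep this by using a conservative lower bound that depends only on the first activation of $u$ in round $r$; additional activations or a shrinking enabled set can only increase the per-activation probability that $(\alpha, m)$ is selected, so the overall per-round success probability stays at least $1/(2\Delta+4)$. Care is also needed in interpreting ``disabled'' so that it covers both structural disablement (via a change to $\state$ or $\phase$, or to the relevant set-equality condition in a \textsc{Check} guard) and the loss of the message $m$ to disconnection before $u$ acts.
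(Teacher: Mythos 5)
Your proposal is correct and follows essentially the same route as the paper's own proof: both use Lemma~\ref{lem:action_execution} to get a per-activation selection probability of at least $1/(2\Delta+4)$, the round definition (plus fairness) to guarantee at least one activation of $u$ per round while $(\alpha,m)$ stays enabled, and a geometric/Bernoulli-trials argument to conclude an expected $2\Delta+4$ rounds. Your explicit handling of multiple activations per round and of mid-round disablement is just a more careful phrasing of the paper's ``in the worst case, $u$ is only activated once per round.''
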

\begin{proof}
    Consider any enabled action execution $(\alpha, m)$ of any enabled node $u$.
    If $(\alpha, m)$ is disabled within $2\Delta + 4$ expected rounds, we are done; otherwise, $u$ is continuously enabled for as long as $(\alpha, m)$ remains enabled.
    By Lemma~\ref{lem:action_execution}, $(\alpha, m)$ is one of at most $2\Delta + 4$ action executions enabled for node $u$ at any time, so $(\alpha, m)$ has at least $1/(2\Delta + 4)$ probability of being executed each time $u$ is activated.

    Suppose $(\alpha, m)$ became enabled for $u$ in round $r$.
    In any round $s > r$ by which $(\alpha, m)$ has neither been executed nor disabled, we know that $u \in \mathcal{E}_s$.
    By definition, round $s$ will not end until $u$ has been activated at least once, and our fairness assumption ensures $u$ must eventually be activated.
    In the worst case, $u$ is only activated once per round, so $(\alpha, m)$ has at least a $1/(2\Delta + 4)$ probability of being chosen and executed per round.
    For Bernoulli trials with success probability $p$, the expected number of trials to obtain a success is $1/p$; therefore, we conclude that $(\alpha, m)$ is executed (or is disabled) by round $r + 2\Delta + 4$ in expectation.
\end{proof}

Using Lemma~\ref{lem:progress}, we will bound the expected number of rounds elapsed from when an initiator node $u$ begins a lock request in \textsc{InitLock}$^I$ to when it successfully obtains locks over its persistent neighbors in \textsc{CheckDone}$^I$.
The preparation phase is straightforward: within $2\Delta + 4$ rounds, in expectation, all persistent neighbors $v$ that $u$ sent \msg{prepare}{} messages to in \textsc{InitLock}$^I$ will execute \textsc{ReceivePrepare}$^P$ and either put $u$ on hold (if a competition over $v$ is ongoing) or add $u$ to their applicant sets and reply with a \msg{ready}{} message.
Node $u$ will receive such \msg{ready}{} responses by executing \textsc{ReceiveReady}$^I$ within another $2\Delta + 4$ rounds in expectation and---if $u$ was not put on hold by any of its persistent neighbors---begin a competition by executing \textsc{CheckStart}$^I$ within another $2\Delta + 4$ expected rounds.

It remains to analyze the competition phase where initiators generate random priorities, send them to their participants in \msg{request-lock}{} messages, wait to learn competition outcomes in the form of \msg{win}{} responses, and retry with a new random priority if they lose.
We refer to each of these attempts as a \textit{competition trial}:

\begin{definition}[Competition Trial] \label{def:competitiontrial}
    A \underbar{competition trial} of an initiator $u$ is an interval that starts from $u$ sending out \msg{request-lock}{} messages as a result of \textsc{CheckStart}$^I$ or \textsc{CheckWin}$^I$ action executions and ends when $u$ executes \textsc{CheckWin}$^I$.
\end{definition}

We first bound the runtime of one competition trial and later return to the question of how many competition trials an initiator has to participate in before it wins one, in expectation.
More specifically, we are only interested in those competition trials in which there is a non-zero probability that an initiator can win.
If some participant $v$ is already locked, no competition trial involving $v$ can progress until $v$ is unlocked, so we do not count these rounds towards the locking time.

\begin{definition}[Open Competition Trial and Open Round] \label{def:open}
    A competition trial of an initiator $u$ is \underbar{open} if and only if $\forall v \in \lockset(u)$, $v$ remains unlocked throughout the competition trial.
    All rounds elapsed during an open competition trial of an initiator $u$ are considered \underbar{open} for $u$. 
\end{definition}

To the runtime of an (open) competition trial, we make use of the directed acyclic graph (DAG) $\mathcal D_i$ defined in~\cite{Daymude2022-localmutual} that captures the dependencies among all nodes involved in concurrent locks at stage $i$.
An initiator node $u$ is \textit{competing} if and only if $\state(u) = \textsc{compete}$, i.e., if $u$ has executed \textsc{CheckStart}$^I$ but has not yet received all \msg{win}{} messages needed to execute \textsc{CheckWin}$^I$.
Dependencies between competing initiators and participants at the start of stage $i$ are represented as a directed bipartite graph $\mathcal{D}_i = (\mathcal{I}_i \cup \mathcal{P}_i, E_i)$ where $\mathcal{I}_i = \{u : \state_i(u) = \textsc{compete}\}$ is the set of competing initiators and $\mathcal{P}_i = \{u : \exists v \in \mathcal{I}_i \text{ s.t.\ } u \in \lockset_i(v)\}$ is the set of participants.
Some nodes belong to both partitions and their initiator and participant roles are considered distinct.
For nodes $u \in \mathcal{I}_i$ and $v \in \mathcal{P}_i \cap \lockset_i(u) \cap N_i[u]$, the directed edge $(u, v) \in E_i$ if and only if $v$ has not yet sent a \msg{win}{} message to $u$ in response to the latest \msg{request-lock}{} message from $u$; analogously, $(v, u) \in E_i$ if and only if $u$ has not yet sent a \msg{request-lock}{} message to $v$ in response to the latest \msg{win}{} message from $v$.

\begin{lemma} \label{lem:competitiontrialtime}
   Every competing initiator resolves its competition trial within at most $2k(10\Delta + 20)$ open rounds in expectation, where $k$ is the number of initiators along the longest path starting at node $u$ in DAG $\mathcal{D}_i$ at stage $i$. 
\end{lemma}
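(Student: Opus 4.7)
I would prove this by induction on $k$, the number of initiators along the longest path from $u$ in $\mathcal{D}_i$. Before starting the induction, I would identify the chain of action executions on which $u$'s trial depends: after $u$ sends \msg{request-lock}{$p$} messages, each persistent participant $v \in \lockset(u)$ must execute \textsc{ReceiveRequest}$^P$ (moving $u$ into $\candidateset(v)$) and then \textsc{CheckPriorities}$^P$ once every candidate has submitted its current priority; then $u$ must execute \textsc{ReceiveWin}$^I$ for each reply and finally \textsc{CheckWin}$^I$. By Lemma~\ref{lem:progress}, each such action, once continuously enabled, is executed within $2\Delta + 4$ expected rounds.

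For the base case $k = 1$, no $v \in \lockset(u)$ has a DAG edge to any other initiator, so every candidate at every such $v$ already has its current priority on file and \textsc{CheckPriorities}$^P$ becomes enabled immediately after \textsc{ReceiveRequest}$^P$ for $u$. The four actions above then form a linear critical chain of expected length at most $4(2\Delta + 4) = 8\Delta + 16 \leq 2(10\Delta + 20)$ open rounds. For the inductive step $k > 1$, let $v \in \lockset(u)$ lie on the longest path from $u$ and let $u'$ be the next initiator along it, whose own longest path in $\mathcal{D}_i$ has $k - 1$ initiators. The DAG edge $(v, u')$ indicates that $u'$ has not yet sent a \msg{request-lock}{} to $v$ after $v$'s last \msg{win}{}; since $u$'s trial is open, $v$ remains unlocked throughout, which precludes $u'$ from concluding its previous trial by locking $v$ and instead drives $u'$ onto a loss/retry branch that eventually emits a fresh \msg{request-lock}{} to $v$ via its own \textsc{CheckWin}$^I$. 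Applying the induction hypothesis to $u'$ bounds the resolution of $u'$'s trial by $2(k-1)(10\Delta + 20)$ expected open rounds, after which the remaining four actions (\textsc{ReceiveRequest}$^P$ at $v$ for $u'$'s new message, $v$'s final \textsc{CheckPriorities}$^P$, $u$'s \textsc{ReceiveWin}$^I$, and $u$'s \textsc{CheckWin}$^I$) contribute at most another $4(2\Delta + 4) \leq 2(10\Delta + 20)$ expected rounds by Lemma~\ref{lem:progress}, summing to $2k(10\Delta + 20)$.

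The main obstacle I anticipate is reconciling the ``open round'' accounting across the DAG chain, since strictly the induction hypothesis is about $u'$'s open rounds whereas we wish to count $u$'s. I would bridge this by exploiting the openness of $u$'s trial (no $v \in \lockset(u)$ is ever locked) to rule out chain-breaking events like $u'$ locking $v$, and by observing that the DAG cannot grow during $u$'s trial because any late \msg{prepare}{} arrival is placed into $\holdset(v)$ via \textsc{ReceivePrepare}$^P$ rather than into $\candidateset(v)$, so $k$ is a valid fixed bound throughout. Parallel DAG subtrees of depth smaller than $k$ run concurrently with the critical chain and, in any round that is open for $u$, are already subsumed by Lemma~\ref{lem:progress}'s per-action activation guarantee, so they do not extend the worst case.
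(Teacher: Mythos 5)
Your proposal has two genuine gaps, both of which the paper's proof is specifically built to handle.

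First, your base case (and the final step of your inductive case) is wrong about when \textsc{CheckPriorities}$^P$ becomes enabled. You claim that for $k=1$ ``every candidate at every such $v$ already has its current priority on file and \textsc{CheckPriorities}$^P$ becomes enabled immediately after \textsc{ReceiveRequest}$^P$ for $u$,'' and you later justify that the DAG cannot grow because late \msg{prepare}{} arrivals are put into $\holdset(v)$. But a \msg{prepare}{} is only put on hold when $\phase(v) = \textsc{compete}$; any initiator whose \msg{prepare}{} arrives while $v$ is still in the \textsc{prepare} phase (i.e., before $v$ receives the first \msg{request-lock}{} of the current competition) is added to $\applyset(v)$, and the guard of \textsc{CheckPriorities}$^P$ requires $\applyset \setminus \disconnectset = \emptyset$. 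Such applicants have $\state \neq \textsc{compete}$, so they are not in $\mathcal{D}_i$ at all, yet $v$ must wait for them to execute \textsc{ReceiveReady}$^I$ and \textsc{CheckStart}$^I$ and for their \msg{request-lock}{} messages to arrive before it can resolve the competition. This is exactly the scenario the paper's base case for participant leaves budgets for (prepare receipt, then ready receipt and \textsc{CheckStart}$^I$, then request receipt and \textsc{CheckPriorities}$^P$), and it is where the constant $10\Delta + 20$ per level comes from; your ``four-action critical chain'' omits it, and the omission recurs at every participant along your path in the inductive step.

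Second, your induction is along a single longest path, but $u$'s \textsc{CheckWin}$^I$ waits for \emph{all} participants in $\lockset(u)$, and each participant's \textsc{CheckPriorities}$^P$ waits for \emph{all} of its candidates and applicants. Dismissing the sibling branches with ``parallel DAG subtrees \ldots are already subsumed by Lemma~\ref{lem:progress}'' is not an argument: waiting for the last of many branches is a maximum over dependent chains, not a single action execution, and Lemma~\ref{lem:progress} only bounds single action executions. The paper instead inducts on the height of the entire dependency component $C_{i,u}$ reachable from $u$ and proves a structural claim: once every leaf of $C_{i,u}$ has executed a \textsc{Check} action, the new component $C_{j,u}$ is a subgraph of $C_{i,u}$ of strictly smaller height (no new out-edges can appear below $u$ without a \textsc{Check} execution that would reverse the edge to the parent). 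That is the mechanism that handles all branches simultaneously and makes the per-level charge of $10\Delta+20$ (hence $2k(10\Delta+20)$ via the alternation of initiators and participants) go through. Relatedly, your worry about ``open rounds for $u'$'' versus ``open rounds for $u$'' is real under your formulation and is not resolved by openness of $u$'s trial: a participant $v' \in \lockset(u') \setminus \lockset(u)$ being locked makes rounds closed for $u'$ but open for $u$, so your inductive hypothesis for $u'$ gives no bound in $u$'s accounting. The paper avoids this nesting entirely by never invoking the lemma recursively on other initiators' trials; it only charges action executions via Lemma~\ref{lem:progress} within $u$'s own (open) trial.
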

\begin{proof}
    We define the \textit{dependency component} $C_{i,u}$ of $u$ at the start of stage $i$ as the subgraph of $\mathcal{D}_i$ reachable from $u$.
    The \textit{height} of this dependency component, $h$ is the number of nodes in the longest path from $u$ to a leaf node (i.e., a node with out-degree zero).
    Any dependency component is finite and acyclic because $\mathcal{D}_i$ itself is finite and acyclic as shown in Lemma 4 of~\cite{Daymude2022-localmutual}.
    Additionally, due to the bipartite nature of $\mathcal{D}_i$, any path starting from $u$ in $C_{i,u}$ will alternate between initiator nodes and participant nodes.
    
    Argue by induction on $h \geq 1$ that any initiator (resp., participant) $u$ with dependency component $C_{i,u}$ of height $h$ takes at most $h(10\Delta + 20)$ open rounds in expectation to complete its next \textsc{CheckWin}$^I$ (resp., \textsc{CheckPriorities}$^P$) action.
    In the base case of $h = 1$, $C_{i,u}$ contains only the node $u$ which is either an initiator or a participant.
    If $u$ is an initiator, then since it is in the DAG $\mathcal{D}_i$, it has previously sent out \msg{request-lock}{} messages to its participants; moreover, since $u$ is a leaf in $\mathcal{D}_i$, all those participants must have already sent \msg{win}{} responses back to $u$.
    Thus, all corresponding action executions of \textsc{ReceiveWin}$^I$ are enabled for $u$ and by Lemma~\ref{lem:progress} must be executed or disabled within $2\Delta + 4$ additional rounds, in expectation.
    This in turn enables \textsc{CheckWin}$^I$ for $u$ which likewise by Lemma~\ref{lem:progress} is executed within an expected $2\Delta + 4$ additional rounds.
    Thus, if $u$ is an initiator, it executes its next \textsc{CheckWin}$^I$ action within $4\Delta + 8$ open rounds, in expectation.

    Otherwise, if $u$ is a participant, its presence in the DAG indicates only that some initiator is still competing for it.
    As a leaf in $\mathcal{D}_i$, $u$ is not waiting for any \msg{request-lock}{} messages from existing competing initiators, but it is possible that it is waiting for \msg{request-lock}{} messages from new initiators in its applicant set who have not yet become competing and joined the DAG.
    By Lemma~\ref{lem:progress}, all participants of these new initiators must have received their \msg{prepare}{} messages within $2\Delta + 4$ open rounds from stage $i$; these new initiators then receive the resulting \msg{ready}{} messages during \textsc{ReceiveReady}$^I$ and respond with \msg{request-lock}{} messages via \textsc{CheckStart}$^I$ within another $4\Delta + 8$ open rounds; and finally---by the same logic as for initiators---$u$ must receive all \msg{request-lock}{} messages via \textsc{ReceiveRequest}$^P$ and execute \textsc{CheckPriorities}$^P$ within another $4\Delta + 8$ rounds, totaling $10\Delta + 20$ open rounds in expectation.
    This concludes the base case.

    Now consider any node $u$ with dependency component $C_{i,u}$ whose height is $h > 1$, and suppose by induction that all nodes $v$ whose dependency components $C_{i,v}$ have heights $h' < h$ will complete their next \textsc{Check} action within $h'(10\Delta + 20)$ open rounds, in expectation.
    By the induction hypothesis (using height 1), all leaves of $C_{i,u}$ will execute their next \textsc{Check} actions within $10\Delta + 20$ open rounds, in expectation.
    Let $j > i$ be the first stage after which all leaves of $C_{i,u}$ have completed at least one \textsc{Check} action.
    We claim that $C_{j,u}$ is a subgraph of $C_{i,u}$ with a strictly smaller height.
    Supposing this claim holds, then the induction hypothesis guarantees that $u$ will complete its next \textsc{Check} action within at most $(h-1)(10\Delta + 20)$ open rounds of stage $j$ in expectation, which is $h(10\Delta + 20)$ open rounds from stage $i$, concluding the induction argument.
    Any dependency component rooted at an initiator has height $h \leq 2k$---since paths in the DAG alternate between initiators and participants---proving the lemma.

    It remains to prove the claim.
    Certainly any node $v$ in $C_{j,u}$ whose dependency path from $u$ has not changed since stage $i$ cannot have new out-edges in stage $j$, since this requires executing a \textsc{Check} action which would have reversed the edge between $v$ and its parent(s).
    On the other hand, any node $v$ in $C_{i,u}$ that did execute a \textsc{Check} action between stages $i$ and $j$ cannot be in $C_{j,u}$, since that \textsc{Check} action reversed the edge between $v$ and its parent(s), disrupting any path from $u$ to $v$, which cannot be reestablished without $u$ also executing a \textsc{Check} action.
    But since stage $j$ occurs immediately after some leaf $\ell$ of $C_{i,u}$ completes its first \textsc{Check} action and there was a path of dependencies from $u$ to $\ell$, $u$ has not executed its next \textsc{Check} action by the start of stage $j$.
    Thus, since no new out-edges are added to $C_{i,u}$ before stage $j$ and every leaf of $C_{i,u}$ has executed a \textsc{Check} action by stage $j$, $C_{j,u}$ is a subgraph of $C_{i,u}$ with strictly smaller height, as desired.
\end{proof}

The preceding lemma bounds the runtime of any one (open) competition trial; in our main theorem, we combine this with the expected number of trials any initiator has to participate in before winning (i.e., drawing the unique highest priority over all participants) to obtain an upper bound on the locking time.
We assume the set of priorities is size $K = \Theta(\Delta^2)$, requiring $\Theta(\Delta)$ memory per node and messages of size $\Theta(\log \Delta)$.

\begin{theorem} \label{thm:lockrounds}
    For $K=\Theta(\Delta^2)$, it takes $\bigo{n \Delta^3}=\bigo{n^4}$ open rounds in expectation for an initiator to successfully complete its \Lock\ operation, with $\Theta(\Delta)$ memory size per node and messages of size $\bigo{\log \Delta}$.
\end{theorem}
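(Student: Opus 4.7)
The plan is to combine Lemma~\ref{lem:competitiontrialtime} (open rounds per competition trial) with a bound on the expected number of competition trials an initiator $u$ undergoes before winning one. The total locking time will then be (expected trials) $\times$ (expected open rounds per trial), plus lower-order contributions from the preparation and set-lock subphases.

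\textbf{Bounding the win probability in one trial.} Each of $u$'s at most $\Delta+1$ participants has at most $\Delta+1$ candidates (since each such participant has at most $\Delta$ neighbors), so $u$ faces at most $m = (\Delta+1)^2 - 1 = O(\Delta^2)$ other competing initiators across all of its participants in a single trial. Since each competing initiator picks its priority uniformly and independently from $\{0, \ldots, K-1\}$ with $K = \Theta(\Delta^2)$, the probability that $u$'s priority is the unique maximum over all these competitors is at least
\[
\sum_{p=0}^{K-1} \frac{1}{K}\left(\frac{p}{K}\right)^{m} \;\geq\; \frac{(1 - 1/K)^{m+1}}{m+1} \;=\; \Omega(1/\Delta^2),
\]
using $m/K = O(1)$ so that $(1 - 1/K)^{m+1} = \Omega(1)$. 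Since priorities in each trial are chosen freshly and independently, this bound holds conditionally on any prior history; a coupling with i.i.d.\ Bernoulli$(\Omega(1/\Delta^2))$ trials then gives an expected number of $O(\Delta^2)$ open trials until $u$ wins one.

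\textbf{Putting everything together.} By Lemma~\ref{lem:competitiontrialtime}, each open competition trial of $u$ completes in $O(k\Delta)$ open rounds in expectation, where $k$ is the number of initiators on the longest path from $u$ in the dependency DAG. Because paths in the bipartite DAG alternate between initiators and participants and there are at most $n$ initiators overall, $k \leq n$, so each trial takes $O(n\Delta)$ open rounds in expectation. Multiplying, the competition phase takes $O(\Delta^2) \cdot O(n\Delta) = O(n\Delta^3)$ open rounds in expectation. By Lemma~\ref{lem:progress}, the preparation (\textsc{ReceivePrepare}$^P$, \textsc{ReceiveReady}$^I$, \textsc{CheckStart}$^I$) and set-lock (\textsc{ReceiveSetLock}$^P$, \textsc{ReceiveAckLock}$^I$, \textsc{CheckDone}$^I$) subphases each contribute only $O(\Delta)$ additional rounds, which is absorbed. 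Since $\Delta \leq n-1$, the total is $O(n\Delta^3) = O(n^4)$ open rounds in expectation.

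\textbf{Resource bounds and main obstacle.} Priorities take $O(\log K) = O(\log \Delta)$ bits, bounding the size of \msg{request-lock}{} messages (all other messages carry no payload). The largest local variable, $\compete$, stores at most $\Delta+1$ port-priority pairs of $O(\log \Delta)$ bits each, giving $\Theta(\Delta)$ memory measured in $O(\log \Delta)$-bit words. I expect the main obstacle to be the win-probability calculation, because the set of competitors of $u$ can change arbitrarily from trial to trial (as others succeed, fail, or newly join) and could in principle correlate with $u$'s prior random choices. The key observation is that priorities within each individual trial are drawn independently and afresh, so the per-trial lower bound $\Omega(1/\Delta^2)$ holds conditional on the entire prior history; this is what enables the coupling argument used to convert per-trial success probabilities into an unconditional $O(\Delta^2)$ bound on the expected number of trials.
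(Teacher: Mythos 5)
Your proposal is correct and follows essentially the same route as the paper: combine Lemma~\ref{lem:competitiontrialtime} with $k \leq n$ for the per-trial cost, bound the expected number of trials via a per-trial win probability of $\Omega(1/\Delta^2)$ (which the paper imports as Lemma~7 of~\cite{Daymude2022-localmutual}, with the same $\frac{(1-1/K)^{\Theta(\Delta^2)}}{\Theta(\Delta^2)}$ form you derive directly), multiply, and absorb the $O(\Delta)$-round preparation and set-lock actions. Your explicit remark that the per-trial bound holds conditionally on the history, justifying the geometric/coupling step, is a nice touch that the paper leaves implicit in writing $E[X]=1/p$.
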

\begin{proof}
    We wish to calculate the number of open rounds it takes in expectation for an initiator $u$ to complete its \Lock\ operation, i.e., the time it takes to go from \textsc{InitLock$^{I}$} to \textsc{CheckDone$^{I}$}. Lemma~\ref{lem:competitiontrialtime} implies that it takes at most $2n (10\Delta + 20)$ expected number of rounds for a competing 
    initiator to resolve its competition trial (since $n$ is an upper bound on the number of initiators). 

    Apart from the four actions involved in a competition trial that we already account for in Lemma \ref{lem:competitiontrialtime}, there are seven additional actions (namely, \textsc{InitLock$^{I}$}, \textsc{ReceivePrepare}$^P$, \textsc{ReceiveReady}$^I$, \textsc{CheckStart}$^I$, \textsc{ReceiveSetLock}$^P$, \textsc{ReceiveAckLock}$^I$ and \textsc{CheckDone}$^I$) on the path from \textsc{InitLock$^{I}$} to \textsc{CheckDone$^{I}$} that will each take at most $2\Delta +4$ rounds in expectation to execute, to a total of at most $7(2\Delta + 4)$ rounds in expectation. It remains to bound the number 
    of trials a competing initiator requires to win the competition, which we denote by $X$. To calculate the expected value of $X$, we utilize the lower bound on the probability $p$ of a node winning a trial as calculated in Lemma 7 of~\cite{Daymude2022-localmutual}, namely $p \geq \frac{(1-1/K)^{2\Delta^{2}}}{2\Delta^{2}}$. 
    It follows that the expected number of competition trials $X$ is bounded by

    \begin{equation}
        E[X]=1/p\leq\frac{2\Delta^{2}}{(1-1/K)^{2\Delta^{2}}}
    \end{equation}

    Putting it altogether, the expected number of rounds to complete a \Lock\ operation for an initiator $u$ is upper bounded by
    \begin{align} 7(2\Delta + 4) + 2k(10\Delta + 20)\cdot E[X] &\leq 7(2\Delta + 4) + 2n(10\Delta + 20)\cdot \left( \frac{2\Delta^{2}}{(1-1/K)^{2\Delta^{2}}}\right) \\
    &= (2\Delta + 4)\left(7+10n\left(\frac{2\Delta^{2}}{(1-1/K)^{2\Delta^{2}}}\right)\right)\\
    &\leq (2\Delta + 4)\left(7+10n\left(\frac{2\Delta^{2}}{e^{-4\Delta^{2}/K}}\right)\right) 
    \label{eq:taylor}\\
    &= (2\Delta + 4)[7+20e^{4/c}n\Delta^{2}] = \bigo{n\Delta^3} 
    \label{eq:Kbound}
    \end{align}
    Equation~\ref{eq:taylor} follows from the Taylor series expansion of $e^{-2x}$, 
    by which we get $1-x \geq e^{-2x}$, for $x\in[0,1/2]$, while Equation~\ref{eq:Kbound} follows from picking $K=c\Delta^2$, where $c$ is positive constant. 

    The nodes only need memory proportional to the maximum number of ports $\Delta$ and to $\log K= \Theta(\log \Delta)=\bigo{\Delta}$, and messages exchanged in the algorithm are of size at most $\bigo{\log K}=\bigo{\log \Delta}$. 
\end{proof}

By definition, lockout freedom requires that every lock request must eventually succeed with probability 1. Let $X\geq 0$ be the random variable that indicates the number of rounds for an initiator node $u$ to complete its lock operation. Theorem~\ref{thm:lockrounds} states that $E[X]<\infty$, thus trivially implying that $X$ also has to be finite (with probability 1), matching the lockout freedom guarantee of~\cite{Daymude2022-localmutual}.

\begin{corollary} \label{cor:deadlockfreedom}
    The local mutual exclusion algorithm satisfies lockout freedom.
\end{corollary}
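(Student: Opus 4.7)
The plan is to derive the corollary as an immediate consequence of the expectation bound in Theorem~\ref{thm:lockrounds} using the elementary measure-theoretic fact that any non-negative random variable with finite expectation is almost surely finite. Concretely, I would let $X \geq 0$ be the random variable counting the number of open rounds between an initiator $u$ calling \Lock\ and $u$ executing \textsc{CheckDone}$^I$. Theorem~\ref{thm:lockrounds} gives $E[X] = \bigo{n \Delta^3} < \infty$, so by Markov's inequality (since $P(X = \infty) > 0$ would force $E[X] = \infty$) we conclude $P(X < \infty) = 1$. This already shows that $u$ issues only finitely many competition trials before winning, with probability $1$.

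To translate this from ``finitely many open rounds'' to ``the lock request eventually succeeds,'' I would briefly address the closed rounds that are not counted toward $X$. A closed round for $u$ corresponds to a stage in which some participant of $u$ is locked by another initiator $w$, and the problem statement guarantees that every successful \Lock\ is eventually followed by a matching \Unlock. Since the \Unlock\ operation (Algorithm~\ref{alg:unlock}) involves no random competition—only a fixed chain of \msg{release-lock}{} and \msg{ack-unlock}{} exchanges—Lemma~\ref{lem:progress} shows that each \Unlock\ instance terminates in $\bigo{\Delta}$ expected rounds, hence almost surely. Thus each closed interval for $u$ ends almost surely, and concatenating with the almost-sure finiteness of $X$ yields that every issued lock request succeeds after finitely many total rounds with probability $1$.

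I do not expect any real obstacle here; the proof is essentially a one-line deduction from Theorem~\ref{thm:lockrounds}, and the paper itself flags it as trivial. The only delicate bookkeeping is the distinction between open and closed rounds, but since lock holders are assumed to release their locks and \Unlock\ is non-competitive, this reduces to invoking Lemma~\ref{lem:progress} a constant number of times per closed interval.
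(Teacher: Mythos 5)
Your proposal is correct and follows essentially the same route as the paper, which likewise defines $X$ as the (nonnegative) number of rounds to complete the \Lock\ operation and deduces $\Pr{X < \infty} = 1$ directly from $E[X] < \infty$ in Theorem~\ref{thm:lockrounds}. Your additional paragraph handling the closed rounds (via the assumption that lock holders eventually call \Unlock\ and Lemma~\ref{lem:progress}) is a careful extra step that the paper's one-line argument silently glosses over, but it does not change the substance of the proof.
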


If $\Delta$ is a constant, which is the case in several applications of interest including those addressed in~\cite{Daymude2022-localmutual}, we have the following result.

\begin{corollary} \label{cor:tightbound}
    If $\Delta = \Theta(1)$, the algorithm takes $\bigo{n}$ open rounds in expectation for an initiator to successfully complete its \Lock\ operation. 
\end{corollary}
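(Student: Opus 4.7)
The plan is to derive this corollary as an immediate consequence of Theorem~\ref{thm:lockrounds}. That theorem already establishes an expected bound of $\bigo{n\Delta^3}$ open rounds for an initiator to complete its \Lock\ operation, for $K = \Theta(\Delta^2)$. The entire proof of the corollary then amounts to substituting the hypothesis $\Delta = \Theta(1)$ into this bound.

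More concretely, I would first note that under the assumption $\Delta = \Theta(1)$, every occurrence of $\Delta$ in the expression appearing in the last line of the proof of Theorem~\ref{thm:lockrounds} — namely $(2\Delta + 4)[7 + 20 e^{4/c} n \Delta^2]$ — collapses to a constant factor. In particular, $\Delta^3 = \Theta(1)$, so $\bigo{n\Delta^3} = \bigo{n}$. I would also briefly observe that the choice of $K = c\Delta^2$ remains $\Theta(1)$ under this hypothesis, so the required memory per node and message size both become $\bigo{1}$ as well, though this is not needed for the statement itself.

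The main obstacle is essentially nonexistent: the corollary is a direct specialization of the main theorem. The only subtlety worth mentioning — which I would flag in a single sentence for completeness — is that the hidden constants in the $\bigo{n}$ bound depend on the constant $\Delta$ (and on the choice of the constant $c$ from the proof of Theorem~\ref{thm:lockrounds}), but this is standard and does not affect the asymptotic statement in $n$.
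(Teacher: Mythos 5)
Your proposal is correct and matches the paper's treatment: the corollary is stated as an immediate specialization of Theorem~\ref{thm:lockrounds}, obtained by substituting $\Delta = \Theta(1)$ into the $\bigo{n\Delta^3}$ bound so that all $\Delta$-dependent factors become constants, yielding $\bigo{n}$ expected open rounds.
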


The proof of Lemma 12 in~\cite{Daymude2022-localmutual} gives us a construction to move from an asynchronous schedule $\mathcal{S}$ to a semi-synchronous schedule $\mathcal{S'}$.
This construction preserves the causal relationship between action executions while adding in filler time steps (stages) to make sure that each node executes only one action per stage.
Combining this construction with our Theorem~\ref{thm:lockrounds} extends our locking time bound to the asynchronous setting.  

\begin{corollary} \label{cor:asynclocktime}
    Under asynchronous concurrency, the algorithm satisfies mutual exclusion, lockout freedom, and has a locking time of $\bigo{n\Delta^3}$ open rounds in expectation.  
\end{corollary}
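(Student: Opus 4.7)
The plan is to reduce the asynchronous case to the semi-synchronous one already handled by Theorem~\ref{thm:lockrounds}, via the construction from Lemma~12 of~\cite{Daymude2022-localmutual}. Given an arbitrary asynchronous schedule $\mathcal{S}$ of the algorithm, I would apply that construction to obtain a semi-synchronous schedule $\mathcal{S}'$ in which each node executes at most one action per stage and the causal relations among action executions (message sends, receipts, state updates) are preserved; filler stages are inserted solely to separate concurrent action executions of the same node in $\mathcal{S}$.

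Next, I would argue that the two correctness properties transfer. Mutual exclusion is a safety property depending only on the invariant that each \lock\ variable points to at most one node at a time, together with the causal ordering of \msg{set-lock}{} and \msg{release-lock}{} messages; both are preserved verbatim by the reduction, so the argument in~\cite{Daymude2022-localmutual} carries over unchanged. Lockout freedom will fall out of the locking time bound, exactly as in Corollary~\ref{cor:deadlockfreedom}: a finite expectation on the number of open rounds forces the lock request to succeed in finitely many rounds with probability $1$.

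For the runtime bound, I would verify that \emph{rounds} and \emph{open rounds} are essentially invariant under the reduction. A round, as defined in Section~\ref{subsec:model}, is governed by the set $\mathcal{E}_r$ of nodes that are enabled or mid-execution and ends once each such node is either disabled or has completed an action execution; neither the filler stages of $\mathcal{S}'$ nor the temporal stretching of actions in $\mathcal{S}$ alter this set or the sequence in which its members complete, so rounds are in bijection across $\mathcal{S}$ and $\mathcal{S}'$. Likewise, a round is open for an initiator $u$ iff no node in $\lockset(u)$ is locked throughout it, a condition determined entirely by the causal sequence of lock/unlock events visible to $u$, which the reduction preserves. Applying Theorem~\ref{thm:lockrounds} to $\mathcal{S}'$ then yields an $\bigo{n\Delta^3}$ expected bound on the open rounds of $u$'s \Lock\ operation in $\mathcal{S}$.

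The main obstacle is this last step: one must confirm that overlapping, arbitrarily long action executions in $\mathcal{S}$ do not let the adversary inflate the open round count relative to the semi-synchronous simulation. The key observation is that Lemmas~\ref{lem:initiatorcheckdisable}--\ref{lem:progress} were phrased in terms of \emph{activations} and enabled action executions rather than stages, so the Bernoulli-trial argument giving each enabled action execution probability at least $1/(2\Delta+4)$ of being selected per activation of its node still applies; combined with the fairness assumption that every continuously enabled node is eventually activated, this suffices to extend Lemma~\ref{lem:progress} (and hence Lemmas~\ref{lem:competitiontrialtime} and Theorem~\ref{thm:lockrounds}) to the asynchronous setting without modification.
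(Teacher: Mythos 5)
Your proposal follows essentially the same route as the paper: invoking the schedule-transformation construction from Lemma~12 of~\cite{Daymude2022-localmutual} to map an asynchronous execution to a causally equivalent semi-synchronous one, and then transferring mutual exclusion, lockout freedom, and the $\bigo{n\Delta^3}$ open-round bound of Theorem~\ref{thm:lockrounds} across that reduction. The paper's own justification is in fact terser than yours; your additional observations (rounds being defined via completion of action executions rather than stages, and Lemmas~\ref{lem:initiatorcheckdisable}--\ref{lem:progress} being phrased in terms of activations) are consistent with, and merely elaborate on, the paper's argument.
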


\section{Conclusion} \label{sec:conclude}

In this paper, we analyzed the locking time of a prior algorithm for a variant of local mutual exclusion~\cite{Daymude2022-localmutual} that enables anonymous, message-passing nodes to isolate concurrent actions involving their persistent neighborhoods despite dynamic network topology. Specifically, we proved that a minor refinement of this algorithm guarantees that any node issuing a lock request will obtain its locks within $\bigo{n\Delta^3}$ open rounds, in expectation, where $n$ is the number of nodes in the dynamic network, $\Delta$ is the maximum degree of the dynamic network, rounds that can be viewed as normalized to the execution time of the ``slowest'' node, and open rounds are those in which a node's neighbors are not already locked by some other node, prohibiting progress. We note that the minor modifications introduced to the algorithm in Section~\ref{sec:alg} and to the scheduler in Section~\ref{subsec:model} are necessary formalizations for runtime analysis, since without them, one can only show eventual termination of the algorithm in~\cite{Daymude2022-localmutual}.

The adversarial scheduler considered in this paper selects an arbitrary subset of enabled nodes at any stage, and each activated node randomly selects one of its enabled actions executions. An alternative that would yield similar expected times as the ones proven here would be to have each node keep a FIFO queue of its enabled action executions, where action executions are inserted into the queue, and executed, in the order they become enabled, and are removed from the queue if they get disabled.
Note that, while the choice of action execution now becomes deterministic, the runtime bound of the algorithm would still be probabilistic, given the use of randomization in the algorithm. Another alternative approach would be to change the guards of the actions to induce a set of "priorities" over the enabled action executions (guaranteeing that certain action executions would only be enabled after "higher priority" action executions are all disabled). However, this does not work within the structure of actions of the algorithm in~\cite{Daymude2022-localmutual}, since any fixed priority approach applied to the current algorithmic structure  leads to the "starvation" of some action executions and prevents progress. In future work, it would be interesting to establish a lower bound revealing whether this locking time bound is tight; if so, it remains an open question whether any algorithm exists for local mutual exclusion whose locking time is free of dependence on the size of the dynamic network.

\bibliographystyle{plainurl}
\bibliography{ref}

\end{document}